\documentclass[aps,twocolumn,pra,superscriptaddress,nofootinbib]{revtex4-2}

\pdfoutput=1
\usepackage{microtype}
\usepackage{amsmath}
\usepackage{amssymb}
\usepackage{amsthm}
\usepackage{mathrsfs}
\usepackage{bm, dsfont}
\usepackage[usenames,dvipsnames]{color}
\usepackage{colortbl}
\usepackage[table]{xcolor}
\usepackage{enumitem}
\usepackage{booktabs}
\usepackage{natbib}
\usepackage[colorlinks=true,linkcolor=blue,citecolor=blue,urlcolor=blue]{hyperref}
\usepackage{cleveref}
\usepackage{hypcap}
\usepackage{verbatim, float}
\usepackage{psfrag}
\usepackage[normalem]{ulem}
\usepackage{physics}
\usepackage{listings}
\usepackage[demo]{graphicx}
\usepackage{tikz}
\usepackage{diagbox}	
\usepackage{relsize, scalerel}

\usepackage{subeqnarray}
\usepackage{geometry}
\newcommand{\cH}{\mathcal{H}}

\newcommand{\be}{\begin{equation}}
\newcommand{\ee}{\end{equation}}

\newcommand{\cE}{\mathcal {E}}

\definecolor{EffGreen}{RGB}{34,139,90}

\definecolor{midnightblue}{rgb}{0.1378,0.3784,0.4838}

\hypersetup{
	colorlinks,
	linkcolor=midnightblue,
	citecolor=midnightblue,
	filecolor=midnightblue,
	urlcolor=midnightblue,
}

\newtheorem{theorem}{Theorem}

\newtheorem{observation}[theorem]{Observation}

\newtheorem*{remark}{Remark}

\begin{document}

\title{Quantifying randomness with measurement incompatibility}

\author{Sebastian Schlösser}\thanks{These authors contributed equally.}
\affiliation{
Department of Physics and Astronomy, Uppsala University, Box 516, 751 20 Uppsala, Sweden}
\affiliation{
Nordita, KTH Royal Institute of Technology and Stockholm University,
Hannes Alfv\'ens v\"ag 12, 10691 Stockholm, Sweden}

\author{Pauli Jokinen}\thanks{These authors contributed equally.}
\affiliation{
Department of Physics and Astronomy, Uppsala University, Box 516, 751 20 Uppsala, Sweden}
\affiliation{
Nordita, KTH Royal Institute of Technology and Stockholm University,
Hannes Alfv\'ens v\"ag 12, 10691 Stockholm, Sweden}

\author{Martin Plávala}\thanks{These authors contributed equally.}
\affiliation{Institut f\"ur Theoretische Physik, Leibniz Universit\"at Hannover, Hannover, 30167, Germany}

\author{Leevi Leppäjärvi}
\affiliation{Faculty of Information Technology, University of Jyväskylä, Finland}

\author{Leonardo S. V. Santos}
\affiliation{
Nordita, KTH Royal Institute of Technology and Stockholm University,
Hannes Alfv\'ens v\"ag 12, 10691 Stockholm, Sweden}

\author{Roope Uola}
\affiliation{
Department of Physics and Astronomy, Uppsala University, Box 516, 751 20 Uppsala, Sweden}
\affiliation{
Nordita, KTH Royal Institute of Technology and Stockholm University,
Hannes Alfv\'ens v\"ag 12, 10691 Stockholm, Sweden}

\begin{abstract}
We present a trade-off between the amount of observed measurement incompatibility and the capabilities of a classical Eavesdropper in a prepare-and-measure scenario. The result is based on a qualitative connection between measurement incompatibility and randomness generation together with the utilization of incompatibility witnesses as randomness certificates. This allows one to use a geometric measure of incompatibility, the generalised robustness, to bound Eve’s strategies through a semi-definite program, while providing an explicit protocol for generating randomness from any set of incompatible measurements. By translating the result to quantum steering, we find a tight connection between steerability and randomness generation in a setting using any finite number of measurement inputs. We further show how our techniques can be generalised to scenarios where Eve has a quantum memory by using a dimensional generalisation of joint measurability.
\end{abstract}

\maketitle

\section{Introduction}

Incompatibility of measurements is one of the building blocks of quantum theory. From uncertainty relations~\cite{coles17} to fundamental precision bounds~\cite{busch13,buschrmp2014}, quantum measurements play a crucial role throughout quantum foundations~\cite{busch2016quantum,heinosaari16b}. More recently, measurement incompatibility has found interest in quantum information theory~\cite{JMReviewGuhne} through various connections to, e.g., quantum correlations~\cite{wolf09,marco2014JM,uola2014JM,uola2015JM,tavakoli2020JM,plavala2024IncmultiBell,ducuara2025,Porto2026} and discrimination tasks~\cite{carmeli19a,skrzypczyk19,uola19b,oszmaniec19,uola19c,buscemi20,kuramochi20}. This has further lead to indirect links between incompatibility and practical quantum-technological tasks, such as benchmarking of photon sources \cite{designolle21} and quantum information carriers~\cite{Engineer2025}, as well as data-analysis methods for quantum correlation tests~\cite{Tavakoli2025bintheory}.

In this theory work, we report a direct use-case for incompatibility in the practical task of random number generation. We do this by converting measurement incompatibility witnesses into randomness generation protocols in semi device-independent prepare-and-measure scenarios. 
In such scenarios, Alice sends quantum messages to a receiver, Bob, who has a collection of untrusted (black-box) measurements. 
Their task is to generate randomness based on the observed data without an external party, Eve, having access to it.

To reach randomness, we first note that the quantum information remaining after Eve's attack is characterised by measurements that are pairwise compatible with Eve's strategy~\cite{heinosaari15b,pello7,uola2022retrievability}. Second, with the help of measurement-theoretic tools, we show that any incompatibility in the receiver's end is a guarantee of randomness. For any given set of incompatible measurements, we then construct a randomness certification protocol and establish a monotonous dependency between randomness and incompatibility. More precisely, we establish a trade-off between Eve's possible strategies and the witnessed incompatibility robustness~\cite{haapasalo15a,uola2015JM}, a well-established measure of measurement incompatibility. This bound can be computed efficiently with a single run of a semidefinite program.

\begin{figure}
    \centering
    \includegraphics[width=1\linewidth]{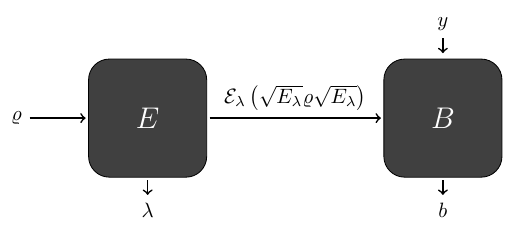}
    \caption{ 
    A trusted quantum state enters a black box device ($E$) from which Eve obtains information $\lambda$. Eve can use channels $\cE_\lambda$ as a decoy in order not to be detected. The quantum message enters Bob's black-box device ($B$), the input $y$ of which is not controlled by Eve.}
    \label{fig:PM_with_eve}
\end{figure}

A key advantage of our approach is that it relies exclusively on measurements, where any degree of incompatibility is linked to randomness. This contrasts with previous results that are restricted to scenarios with two inputs or measurement structures such as star-incompatibility~\cite{passaro15,minagawa2026,Li2024}.
Translating our results to steering allows one to build such connection for any measurement structure and, hence, to improve upon the noise tolerance of state-of-the-art steering-based randomness generation protocols. Finally, we show that dimensional simulability \cite{Ioannou2022} of measurements can be used to model a more general Eve with a qubit memory.

\newpage

\section{Prepare-and-measure scenario with Eve}
We are interested in prepare-and-measure settings: Alice prepares a collection of trusted quantum states acting onto a finite-dimensional Hilbert space, and sends them to Bob, whose measurements act as black-boxes with inputs $y$ and outputs $b$. We allow for an eavesdropper (Eve) in the middle, who can intercept the message and send some disturbed version of it to Bob, see Fig.~\ref{fig:PM_with_eve}.

We assume a classical Eve, meaning that she does not have a quantum memory. With trusted input states, the most general operation she can perform is described by an instrument, a collection of completely positive (CP) maps $\{\mathcal I_\lambda\}$ summing to a trace-preserving (TP) map, i.e.~a quantum channel. Any instrument is of the form $\varrho\mapsto \cE_\lambda(\sqrt{E_\lambda}\varrho\sqrt{E_\lambda})$\footnote{See, e.g., Theorem 7.2 of Ref.~\cite{HayashiBook06} or Corollary 1 of Ref.~\cite{Pello4}.}. Here $\{E_\lambda\}$ extracts classical information for Eve and is described by a positive-operator-valued measure (POVM), $E_\lambda\geq0$ (positive semidefinite) and $\sum_\lambda E_\lambda=\openone$ (the identity operator). Now $\{\cE_\lambda\}$ are CPTP maps from the input Hilbert space to Bob, whose Hilbert space is uncharacterised. Eve uses these channels as a decoy in order not to be detected.

The task of Alice and Bob is to bound Eve's knowledge about Bob's outcomes: the less she knows, the more ``random'' the outcome is. Bob's input may become available to Eve after she has performed her measurement. In such scenario, the randomness quantifier reads~\cite{konig2009operational}
\begin{subeqnarray}\label{eq:H_min}
H_{\rm min}(B\mid E, Y) &=& -\log P_g \qquad \text{with}\\
P_g &=& \sum_y p(y) {\rm Pr}(e_y(\lambda) = b). 
\end{subeqnarray} 
Here, Eve can post-process her outcome $\lambda$ into a guess $e_y(\lambda)$ of Bob's outcome upon learning Bob's input $y$. The inputs are sampled with probability $p(y)$, which one can further optimise over.

\section{Measurement incompatibility}

A crucial notion in our study is measurement incompatibility. A collection of POVMs $\{B_{b|y}\}$ is called jointly measurable (JM) if there exists a single POVM $\{G_\lambda\}$ and probability distributions $\{p(\cdot|y,\lambda)\}$ such that
\begin{equation}
B_{b|y}=\sum_\lambda p(b|y,\lambda)G_\lambda.
\end{equation}
A collection of POVMs that is not JM is called incompatible. As we will see, JM can be used to characterise Eve's capabilities in our setting.

After Eve has performed her measurement, Bob measures the system using some uncharacterised POVMs $\{\tilde B_{b|y}\}$ acting onto a possibly infinite-dimensional Hilbert space. The non-restricted dimension reflects the fact that Bob's measurements are untrusted. Their joint statistics takes the form
\begin{equation}
p_{EB}(b,\lambda\mid y)=\text{tr}\big[\varrho\,\sqrt{E_\lambda}\cE_\lambda^*(\tilde B_{b|y})\sqrt{E_\lambda}\big].
\end{equation}
Here, we have expressed Eve's instrument in the Heisenberg picture\footnote{Note that, while $\lambda$ could be an extra input to Bob's black box, given that it is independent of $y$, it can be absorbed into the channel $\cE_\lambda$, so we choose to use the simpler notation.}. By identifying
\begin{equation}\label{eq:Glambdaby}
G_{\lambda,b|y}:=\sqrt{E_\lambda}\cE_\lambda^*(\tilde B_{b|y})\sqrt{E_\lambda},
\end{equation}
we have a collection of POVMs that characterises the statistics of the prepare-and-measure scenario. Marginalising over Bob's outcomes $b$ yields Eve's measurement $\{E_\lambda\}$, whereas marginalising over $\lambda$ yields \textit{Bob's effective POVMs} $\{B_{b|y}\}$.

By construction, for every $y$, Bob's effective POVM $B_{b|y}$ is JM with Eve's POVM $\{E_\lambda\}$. 
Moreover, the opposite also holds true~\cite{heinosaari15b,pello7,uola2022retrievability}: the statistics of any jointly measurable pair of measurements can be obtained from a sequential scenario. We formalise this remark in the following, and, for completeness, provide its proof in \hyperref[app:A]{Appendix A}.

\begin{remark}
    In a prepare-and-measure scenario, a classical Eve is fully characterised by those POVMs that are pairwise jointly measurable with all Bob's effective measurements.
\end{remark}

We conclude that Eve's knowledge can be limited by obtaining some information about Bob's effective POVMs $\{B_{b|y}\}$. It is crucial to note that whereas Bob's measurements $\{\tilde B_{b|y}\}$ are fully uncharacterised, Bob's effective POVMs $\{B_{b|y}\}$ act onto Alice's Hilbert space which has characterised input states. This makes it possible to probe, for example, their incompatibility properties. Intuitively, the more incompatible Bob's effective measurements, the fewer strategies Eve can have. We make this statement more precise in the following sections.

\section{Security from incompatibility}

We are ready to prove a tight connection between incompatibility and security. We write Eve's guessing probability as
\begin{equation}
P_g=\sum_{y,e}p(y)p_{EB}(e,e|y),
\end{equation}
which includes post-processing of $\lambda$ after $y$ becomes available to Eve. We call the scenario secure, if $P_g<1$. Without loss of generality, we can limit Eve's post-processings to deterministic ones by defining
\begin{equation}
G_{\mathbf{e},b|y}:=\sum_\lambda\prod_{\tilde{y}} p(e_{\tilde{y}}|\lambda,\tilde{y})G_{\lambda,b|y},
\end{equation}
where $\mathbf{e}=(e_1,...,e_n)$ is a vector of Eve's outcomes with $e_{\tilde{y}}$ corresponding to her guess for Bob's measurement $\tilde{y}$. The POVM $G_{\mathbf{e},b|y}$ is a pairwise joint measurement for $B_{b|y}$ and Eve's measurement $E_{\mathbf{e}}:=\sum_b G_{\mathbf{e},b|y}$. As such, it can be realised in a sequence where Eve first measures $E_{\mathbf{e}}$ and Bob then measures his black-box measurement with input $y$~\cite{heinosaari15b,pello7,uola2022retrievability}. In the following Observation, we note that Eve being able to guess perfectly the outcomes of Bob's measurement $y$ is a very strong condition, in that it is equivalent to the $y$th marginal of $E_{\mathbf{e}}$  being equal to Bob's effective POVM $B_{b|y}$.

\begin{observation}\label{Obs:security}
    A prepare-and-measure randomness generation scenario based on full-rank input states is secure iff Bob's effective POVMs are incompatible.
\end{observation}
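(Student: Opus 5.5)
We prove the two directions separately, writing security as $P_g<1$ and working with the deterministic form $P_g=\sum_{y,e}p(y)\,\text{tr}[\varrho\, G_{e,e|y}]$, where $G_{e,e|y}$ collects all $G_{\mathbf e,b|y}$ with $e_y=e$ and $b=e$. We assume $p(y)>0$ for every $y$ (otherwise drop the unused inputs). The full-rank assumption on the input states $\varrho$ is what lets us turn statements about probabilities into operator identities.

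\emph{Insecure $\Rightarrow$ compatible.} Suppose $P_g=1$. Since $p(y)>0$, we get $\sum_e \text{tr}[\varrho\,G_{e,e|y}]=1$ for every $y$. Thus the positive operator $\sum_{\mathbf e,b:\,b\neq e_y} G_{\mathbf e,b|y}$ has zero expectation on the full-rank $\varrho$, hence it vanishes, and each $G_{\mathbf e,b|y}$ with $b\neq e_y$ vanishes as well. Consequently
\begin{equation}
B_{b|y}=\sum_{\mathbf e} G_{\mathbf e,b|y}=\sum_{\mathbf e:\,e_y=b}\ \sum_{b'} G_{\mathbf e,b'|y}=\sum_{\mathbf e}\delta_{b,e_y}E_{\mathbf e}.
\end{equation}
So every effective POVM of Bob is a deterministic post-processing of the single POVM $\{E_{\mathbf e}\}$. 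This means Bob's effective POVMs are jointly measurable. This is exactly the statement that the $y$th marginal of $E_{\mathbf e}$ equals $B_{b|y}$, as anticipated in the text before the Observation. The only subtlety is checking that $E_{\mathbf e}=\sum_b G_{\mathbf e,b|y}$ does not depend on $y$. It does not, because summing over $b$ removes $\tilde B_{b|y}$ in Eq.~\eqref{eq:Glambdaby} by unitality of $\cE_\lambda^*$.

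\emph{Compatible $\Rightarrow$ insecure.} Suppose $B_{b|y}=\sum_\lambda p(b|y,\lambda)G_\lambda$. Each such post-processing can be decomposed into deterministic ones, so we may take $B_{b|y}=\sum_{\mathbf e}\delta_{b,e_y}F_{\mathbf e}$ for a refined POVM $\{F_{\mathbf e}\}$. Eve measures $F_{\mathbf e}$ and then prepares a decoy for Bob. The simplest choice is a classical register $\cE_{\mathbf e}(\cdot)=\text{tr}(\cdot)\,|\mathbf e\rangle\langle\mathbf e|$, and Bob's black box on input $y$ outputs $e_y$. Bob's black box is uncharacterised and may act on any space, so this is a legitimate strategy. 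By the Remark, equivalently by Appendix~A, it reproduces exactly the effective POVMs $B_{b|y}$, so it is undetectable. Its guessing probability is $P_g=1$, so the scenario is insecure.

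The main obstacle is the first direction: it needs the passage from "probability one on the given states" to the operator identity. We handle it through positivity together with the full rank of $\varrho$, and we must state clearly that this is where the hypothesis on the inputs is used. If Alice uses several states, it suffices that some state in the ensemble, or their average, is full rank.
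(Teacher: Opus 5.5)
Your proof is correct. The direction ``$P_g=1\Rightarrow$ JM'' is the paper's own argument. Positivity of $\sum_{\mathbf e,\,b\neq e_y}G_{\mathbf e,b|y}$ together with full rank of $\varrho$ kills the off-diagonal terms, and $y$-independence of $E_{\mathbf e}$ then gives $B_{b|y}=\sum_{\mathbf e}\delta_{b,e_y}E_{\mathbf e}$.

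You differ from the paper in the direction ``JM $\Rightarrow$ insecure''. The paper (main-text sketch and Appendix~B) takes a minimal Naimark dilation $G_{\mathbf b}=J^*P_{\mathbf b}J$ of the marginal-form joint measurement. It gives Eve the Heisenberg-picture instrument $A\mapsto J^*P_{\mathbf e}AP_{\mathbf e}J$ and lets Bob's box measure the marginal projections $P_{b|y}=\sum_{\mathbf b:\,b_y=b}P_{\mathbf b}$. These commute with Eve's projections, so she records Bob's outcome without disturbing the message. You instead use a measure-and-prepare attack with a classical register, where $G_{\mathbf e,b|y}=\delta_{b,e_y}F_{\mathbf e}$ follows from a one-line computation; your appeal to the Remark or Appendix~A is therefore unnecessary. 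Your version is more elementary and fully legitimate, since both the decoy channels and Bob's device are uncharacterised. The paper's version shows in addition that the attack can be run coherently, with Eve's measurement commuting with Bob's and the quantum system passed on.

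One caution concerns your parenthetical ``otherwise drop the unused inputs''. Inputs with $p(y)=0$ still constrain Eve, because her strategy must reproduce their effective POVMs as well. Appendix~B shows that perfect guessing on a set $S$ of inputs is equivalent to $\{B_{b|y}\}_{y\in S}\cup\{B_{b|y'}\}$ being jointly measurable for every $y'$, not to $\{B_{b|y}\}_{y\in S}$ alone being jointly measurable. For example, take two incompatible MUBs with $p(1)=1$ and $p(2)=0$. The scenario is secure, although $\{B_{b|1}\}$ by itself is trivially jointly measurable. So the equivalence genuinely needs $p(y)>0$ for all $y$, which is the assumption you and the paper actually use in the argument.
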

\begin{proof}[Sketch of proof]
We first note that JM is closely related to Eve's possibility to guess Bob's outcomes. If Bob's effective POVMs $\{B_{b|y}\}$ are JM, one can represent them using a (minimal) Naimark dilation of a marginal form joint measurement $\{G_\lambda\}$ as $B_{b|y}=J^*P_{b|y}J$ with mutually commuting projections $\{P_{b|y}\}$. This allows Eve to use the following strategy: She first applies the Naimark isometry $J$ to the input states and then measures the dilated joint measurement. After this, Bob's measurements are simply the projections $\{P_{b|y}\}$ on the dilation space, which commute with Eve's measurement. Therefore, \emph{Eve can record Bob's outcomes without disturbing the system.}

Conversely, we show that if Eve can guess outputs of all of Bob's effective POVMs, then they are JM. Here we assume $p(y)\neq0$ for all $y$; for other scenarios see \hyperref[app:B]{Appendix B}. For each $y$ we have $1=\sum_e p_{EB}(e,e|y)$. For a specific input $y^*$ this implies 
\begin{equation}
0=\text{tr}\Big[\varrho\,\Big(B_{e|y^*}-
\sum_{\{\mathbf{e}|e_{y^*}=e\}} G_{\mathbf{e},e|y^*}\Big)\Big].
\end{equation}
By noting that $\sum_{\mathbf{e},e_{y^*}=e} G_{\mathbf{e},e|y^*}\leq\sum_{\mathbf{e}} G_{\mathbf{e},e|y^*}= B_{e|y^*}$ and further assuming that the state $\varrho$ has full rank, one sees that 
\begin{subeqnarray}
B_{e|y^*}&=&\sum_{\{\mathbf{e}|e_{y^*}=\ e\}} G_{\mathbf{e},e|y^*}\\
&=&\sum_{\{\mathbf{e}|e_{y^*}=\ e\}}\sum_b G_{\mathbf{e},b|y^*}\\
&=&\sum_{\{\mathbf{e}|e_{y^*}=\ e\}} E_\mathbf{e}.
\end{subeqnarray}
Here, we have used $\sum_b B_{b|y^*}=\openone$ on the second line and $\sum_b G_{\mathbf{e},b|y^*}=E_{\mathbf{e}}$ on the third. As the reasoning and Eve's measurement are independent of $y^*$, the claim follows.
\end{proof}

We note that the assumption of a full-rank state is crucial, as there are incompatible measurements whose restriction to any subspace is jointly measurable~\cite{Loulidi2021,uola21}. We further note that \hyperref[Obs:security]{Observation 1} is a proof-of-principle statement: it provides a connection between incompatibility and security, but not a recipe for randomness generation. On the one hand, it uses only one input state per measurement input $y$, which is insufficient for deciding whether generic effective measurements of Bob are incompatible. On the other hand, it does not state how much Eve's guessing probability deviates from $1$. In the following, we will see that sending only one state per measurement input can lead to scenarios where the guessing probability approaches one asymptotically as the amount of incompatibility decreases. We will demonstrate how to overcome this by probing more of the space, i.e., by sending more states.

\section{Quantification of incompatibility and randomness}

In order to quantify incompatibility, we use the so-called generalised robustness measure.
It measures how much a collection of measurements can be mixed with any other collection before becoming jointly measurable. Formally, the incompatibility robustness of POVMs $\{B_{b|y}\}$ is defined as
\begin{align}\label{IR}
         IR(B_{b|y}) = & \min t\ge 0 \\
        & \text{s.t. } \frac{B_{b|y} + t N_{b|y}}{1+t} \in \text{JM} \nonumber
\end{align}
where the optimisation is over all collections of POVMs $\{N_{b|y}\}$ and JM stands for the jointly measurable set. This optimisation problem can be cast as a semi-definite program~\cite{uola2015JM}, and is closely related to state-discrimination tasks \cite{carmeli19a,skrzypczyk19,uola19b,oszmaniec19,uola19c,buscemi20,kuramochi20}.

The robustness measure can be lower-bounded by using trusted positive semi-definite operators $\{\varrho_{b|y}\}$ that form a so-called incompatibility witness. These fulfill
\begin{equation}
\sum_{b,y}\text{tr}[\varrho_{b|y}O_{b|y}]\leq \beta_{\text{JM}},
\end{equation}
for all jointly measurable collections of measurements $\{O_{b|y}\}$. By the duality theory of semi-definite programs, a violation of this bound by a collection of measurements $\{B_{b|y}\}$ puts a lower bound on their incompatibility robustness, i.e.
\begin{equation}
\sum_{b,y}\text{tr}[\varrho_{b|y}B_{b|y}]\leq(1+IR(B_{b|y}))\beta_{\text{JM}}.
\end{equation}
As only the input states are trusted, these lower bounds are semi-device independent in that the measurements can be treated as black boxes, which fits together with our description of Bob's effective measurements.

We split incompatibility witnesses into two scenarios: security and randomness rounds. The idea is that the label $y'$ of the sent state and the label $y$ of the measurement are drawn pseudo-randomly at the beginning of each round. In this way there will be rounds with $y'=y$, which give a lower bound $\alpha$ on incompatibility robustness, and rounds with $y\neq y'$, which are used to generate randomness.
 
To incorporate this choice, we pick states $\varrho_{b^*|y+1}$ for randomness generation, in which case Eve's guessing probability is given by the following SDP:
\begin{align}\label{Eq:EveGuess2MUB}
        P_g =  \max &\sum_{y,e}p(y)\text{tr}\left[\varrho_{b^*|y+1}\sum_{\{\mathbf{e}|e_{y}=e\}} G_{\mathbf{e},e|y}\right] \\
        \text{s.t. } &\sum_{b,y}\text{tr}\left[ \varrho_{b|y}\sum_{\mathbf{e}} G_{\mathbf{e},b|y}\right]\geq (1+\alpha)\beta_{\text{JM}}\nonumber\\
        &G_{\mathbf{e},b|y}\geq0, \;\;\sum_{\mathbf{e},b}G_{\mathbf{e},b|y}=\openone\nonumber\nonumber\\
        &\sum_b G_{\mathbf{e},b|y}=\sum_b G_{\mathbf{e},b|y'}.\nonumber
\end{align}
We note that one can vary $p(y)$ and $\varrho_{b^*|y+1}$, but as we shall see in the following examples, choosing $p(y)$ to be uniform and the states $\varrho_{b^*|y+1}$ from the used incompatibility witness is sufficient for certifying randomness in basic examples.

\begin{figure*}
    \centering
    (a)
    \includegraphics[width=0.45\linewidth]{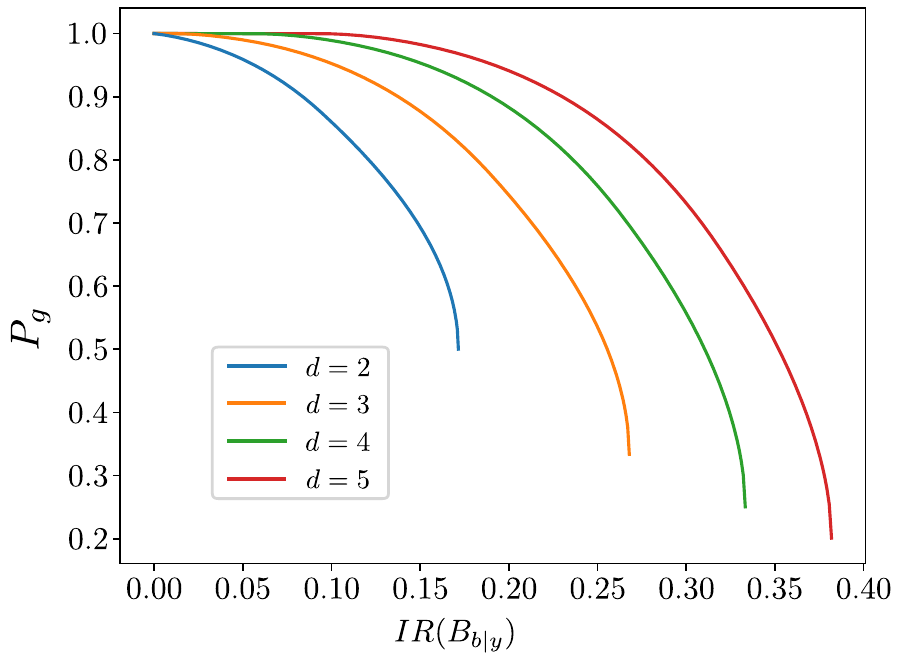}
    (b)
    \includegraphics[width=0.45\linewidth]{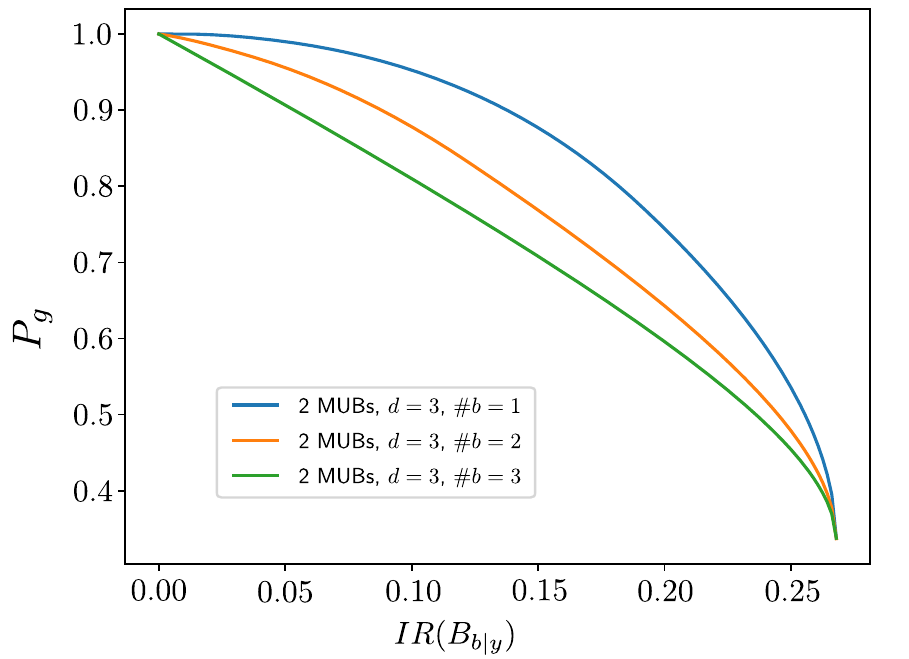}
    \caption{{Eve's guessing probability as a function of the observed incompatibility robustness [Eq.~\eqref{Eq:EveGuess2MUB}].} (a) Two mutually unbiased bases in dimensions $2$ to $5$. Since each of these sets of PVMs have different maximal incompatibility robustness, the lines stop at different points. (b) Two mutually unbiased bases in dimension 3 with multiple input states per basis.}
    \label{fig:IR-Pg-two}
\end{figure*}

\subsection*{Example 1: Two MUBs}

The optimisation problem in Eq.~\eqref{Eq:EveGuess2MUB} suggests a quantitative trade-off between Eve's capabilities and the amount of incompatibility possessed by Bob's effective measurements. To demonstrate this, we take two $d$-outcome measurements for Bob and an incompatibility witness based on two mutually unbiased bases (MUBs) $\varrho_{b|y}=|\varphi_{b|y}\rangle\langle\varphi_{b|y}|$\footnote{Although our security proof in \hyperref[Obs:security]{Observation 1} requires faithful states, by adding any non-zero amount of white noise to the witness allows one to use the above inequality, although the detection strength may slightly drop.}. This witness lower-bounds the incompatibility robustness of 
$B_{b|y}$ as~\cite{designolle21} 
\begin{align}\label{Eq:WitnessBound}
    \frac{1}{d+\sqrt{d}}\sum_{b,y}\text{tr}[B_{b|y}\varrho_{b|y}]\leq1+IR(B_{b|y}).
\end{align}
We exemplify the use of this witness for quantifying randomness in Fig.~\ref{fig:IR-Pg-two}(a).

\subsection*{Example 2: Dealing with the plateaus}

To deal with the asymptotic behaviour in Fig.~\ref{fig:IR-Pg-two}(a), one can add more input states for each $y$. This causes two modifications. First, the guessing probability now includes a sum over the additional states, i.e. over the label $b$. Second, as the information about the label $b$ can become public, Eve can base her post-processing on this information. Thus the guessing probability needs to be updated to $\sum_{b,y,e}p(b,y)\text{tr}\left[\varrho_{b|y+1}\sum_{\{\mathbf{e}|e_{b,y}=e\}} G_{\mathbf{e},e|y}\right]$. Here Eve's vector of outcomes is updated to have a component for each pair $(b,y)$. In Fig.~\ref{fig:IR-Pg-two}(b) we demonstrate the behaviour using the two MUB witness in a qutrit, showing that the asymptotic behaviour changes when using more input states.

\subsection*{Example 3: Randomness from more measurements}

In the case of multiple measurements, one can generate randomness in a noisier setting than when using only two inputs, yet the analysis has to be done more carefully. 

Let us demonstrate this with an example. Consider three inputs for Bob. The triplet of Bob's effective measurements may be incompatible while every pair is JM~\cite{heinosaarifop2008}. An example is that of a hollow triangle: three MUBs in a qubit with visibility in the interval $(1/\sqrt{3},1/\sqrt{2}]$. Due to Eve being pairwise compatible with all Bob's effective measurements, such structure may allow Eve to perfectly guess the outcomes of one of the measurements, but she nevertheless cannot know the outcomes of the other two: if Eve could guess the outcomes of, say, the first and the second of Bob's effective measurements, by \hyperref[Obs:security]{Observation 1} her POVM would be a joint measurement of these two, thus implying triple-wise JM, a contradiction.

This suggests that randomness can be generated from incompatibility even when some pairwise (or $n$-wise) compatibility relations are present. 
The key point is that one needs to generate randomness from various measurements.
In fact, for the hollow triangle, all pairs of measurements are compatible~\cite{designolle19b} and, hence, randomness cannot be extracted from a single measurement.
By contrast, one can still witness incompatibility of the triplet and use the obtained value $\alpha$ as an input for the SDP in Eq.~\eqref{Eq:EveGuess2MUB}. For dimensions $2$, $3$ and $4$, the witnessed robustnesses we found, respectively, $0.082$, $0.107$ and $0.125$, and the corresponding guessing probabilities $0.951$ ,$0.980$ and $0.995$ when using a single state per basis. When using all states of the witness, the qubit SDP runs easily on a laptop and gives $0.888$ for guessing probability.

\subsection*{Example 4: Eve with a quantum memory.} 

Let us consider an Eve, who has access to a single-shot quantum memory. Her strategy is to perform her initial measurement $\{E_\lambda\}$ by applying an instrument $\{\mathcal I_\lambda\}$ that has two quantum outputs: Eve's quantum memory and Bob's black box system. We allow Eve to base her measurement in the memory system on Bob's input $y$. This leads to an effective POVM $G_{\lambda,\lambda',b|y}$, where $\lambda$ comes from Eve's initial measurement, $\lambda'$ from Eve's memory measurement and $b$ from Bob's measurement. The POVM has the explicit form $G_{\lambda,\lambda',b|y}:=\mathcal I_\lambda^*(M_{\lambda'|y,\lambda}\otimes\tilde{B}_{b|y})$. In the following, we consider Eve with a qubit memory for simplicity and note that the theoretical calculations generalise directly to high-dimensional single-shot memories, while the numerical methods will require more computational resources.

We first note that if Eve with a qubit memory can perfectly guess Bob's outcome for some faithful input states, we can use the same argumentation as in Observation \ref{Obs:security} to prove that Bob's effective measurements need to be qubit simulable. This means that $B_{b|y}=\sum_\mu\tilde{\mathcal{I}}_\mu^*(N_{b|y,\mu})$ for some instrument $\tilde{\mathcal{I}}_\mu$ with Kraus operators having rank less than or equal to two \cite{Ioannou2022}. We present the proof of this claim for completeness in Appendix~\ref{app:memory-toSimulability}.

Based on incompatibility robustness of Bob's effective POVMs, one can limit the possible strategies of Eve. We show in Appendix~\ref{app:memory-fromSimulability} that Eve's guessing probability can be bounded by the following SDP hierarchy:

\begin{align}\label{Eq:EveQuantumGuess}
       P_g \leq  \max &\sum_{y,e}p(y)\text{tr}\left[\varrho_{b^*|y+1}\sum_{e} G_{e,e|y}\right] \\
        \text{s.t. } &\sum_{b,y}\text{tr}\left[ \varrho_{b|y}\sum_{e} G_{e,b|y}\right]\geq (1+\alpha)\beta_{\text{JM}}\nonumber\\
        &G_{e,b|y}\geq0, \;\;\sum_{e,b}G_{e,b|y}=\openone\nonumber\nonumber\\
        &\{E_{e|y} = \sum_{b}G_{e,b|y}\}\quad \text{is qubit simulable,}\nonumber
\end{align}
where we note that since Eve's POVMs can depend on $y$ explicitly, we can reduce the vector of Eve's outcomes $\mathbf{e}$ to a single label $e$ from the same outcome set as $b$.

The above hierarchy can be relaxed to a single SDP by requiring that Eve's measurements do not violate a given incompatibility witness by a higher amount that what is allowed by qubits. Such bounds are known for two MUBs \cite{jones2023equivalence,designolle21}. In the case of qutrit input states, the upper bound in Eq.~(\ref{Eq:WitnessBound}) becomes $2\sqrt{2} / (1+\sqrt{2})$. With this, we can certify a guessing probability of 0.924 for Eve when incompatibility robustness is maximal.

\section{Quantum steering}

Incompatibility is closely related to the task quantum steering~\cite{uola20review}. In this task, one focuses on state assemblages, that is collections of operators of the form
\begin{equation}
\sigma_{b|y}=\text{tr}_B[(\openone\otimes B_{b|y})\varrho_{AB}].
\end{equation}
Here $\varrho_{AB}$ is a quantum state shared with a trusted Alice and untrusted Bob. A state assemblage is called unsteerable if
\begin{equation}
\sigma_{b|y}=\sum_\mu p(b|y,\mu)\sigma_\mu.
\end{equation}
for some positive semi-definite operators $\sigma_\mu$. 

In this context, a canonical scenario for randomness extraction is one in which Eve has her own system and tries to guess Bob's output. This is mathematically described with state assemblages
\begin{equation}
\sigma_{e,b|y}=\text{tr}_{BE}[(\openone\otimes B_{b|y}\otimes E_e)\varrho_{ABE}].
\end{equation}
and the guessing probability is given as
\begin{equation}
P_g=\sum_e\text{tr}[\sigma_{e,e|y}].
\end{equation}
To align this with our scenario, we first replace Eve's outcome $e$ with the vector $\bf{e}$. We note that the crucial elements in our analysis are the POVMs $G_{\mathbf{e},b|y}$, which are related to state assemblages via $\sigma^{1/2}G_{\mathbf{e},b|y}\sigma^{1/2}=\sigma_{\mathbf{e},b|y}$, for some full-rank state $\sigma$. By the SGHJW~\cite{gisin89,hughston93} theorem, one has $\sigma_{\mathbf{e},b|y}=\tr_{\tilde{B}}[(\openone\otimes G_{\mathbf{e},b|y})|\psi_\sigma\rangle\langle\psi_\sigma|]$. On the one hand, we have the sequential form $G_{\mathbf{e},b|y}=\mathcal{I}_{\mathbf{e}}^*(\tilde{B}_{b|y})$. On the other hand, we can use a Stinespring dilation of the instrument to write $\mathcal I_\mathbf{e}(\varrho)=\tr_E[(\openone\otimes E_\mathbf{e}) V\varrho V^*]$. Combining these gives $\sigma_{\mathbf{e},b|y}=\text{tr}_{\tilde{B}E}[(\openone\otimes \tilde B_{b|y}\otimes E_\mathbf{e})\varrho_{A\tilde{B}E}]$.

In a realistic scenario, the recorded statistics for Alice, Bob and Eve read $p(a,b,\mathbf{e}|x,y)=\text{tr}[(A_{a|x}\otimes\tilde B_{b|y}\otimes E_\mathbf{e})\varrho_{A\tilde{B}E}]$. The guessing probability of the canonical scenario is related to this distribution via marginalisation over Alice's outcome. In line with the incompatibility-based randomness certification, we suggest coordinating the guessing rounds and using $p(b,\mathbf{e}|a,x,y)$ instead for randomness. We define the guessing probability as $P_g=\sum_{y,e}p(y)\text{tr}\left[A_{a^*|y+1}\sum_{\{\mathbf{e}|e_{y^*}=e\}} \sigma_{\mathbf{e},e|y}\right]/\text{tr}[\sigma A_{a^*|y+1}]$ with $\sigma=\sum_{\mathbf{e},b}\sigma_{\mathbf{e},b|y}$. This expression can be obtained from Eq.~(\ref{Eq:EveGuess2MUB}) by sandwiching the variables $G_{\mathbf{e},b|y}$ with a square-root of a full-rank state $\sigma$ and normalising the guessing probability.

To bound Eve's knowledge, one can translate the optimisation region of Eq.~(\ref{Eq:EveGuess2MUB}) to the steering setting. Other than the witness bound, the conditions translate directly. We show in the \hyperref[app:E]{Appendix E} that the witness bound is simply replaced by a bound on a steering witness, and that in this way the two optimisation problems can be mapped to one another. In other words, all our randomness bounds based on $G_{\mathbf{e},b|y}$ have a steering-equivalent bound based on $\sigma_{\mathbf{e},b|y}$. This implies specifically a tight link between steerability and randomness certification for any finite number of measurement inputs. As another notable example, one can take a hollow triangle for $\tilde{B}$. For any shared state, the resulting state assemblage $\sum_\mathbf{e}\sigma_{\mathbf{e},b|y}$ is star-unsteerable \cite{minagawa2026} and the effective measurements $\sum_\mathbf{e}G_{\mathbf{e},b|y}$ are star-compatible. It is known that such an assemblage cannot provide randomness in the canonical scenario \cite{minagawa2026}. However, our results, e.g., on the hollow triangle show that randomness can indeed be certified when using the witness-based scenario and a suitably chosen shared state, such as one sufficiently close to a full-Schmidt rank state between Alice and Bob.
 
\section{Conclusions}
We have presented a fully measurement-theoretical analysis of randomness generation in a prepare-and-measure scenario. We have established a trade-off between the capabilities of a classical Eavesdropper and the amount of incompatibility present in the system. This is done by using a known quantifier of incompatibility, the generalised incompatibility robustness, which is shown to be intimately connected to Eve's capabilities and, hence, to certifiable randomness. As the incompatibility robustness can be decided from state discrimination tasks, which are obtainable through a semi-definite program, our work proposes a constructive way for generating randomness from a given set of incompatible quantum measurements, which extends to steerable state assemblages. This is in contrast to existing works, which provide a tight connection between steering and randomness in two input scenarios $\cite{Li2024}$ and in star-steerable scenarios \cite{minagawa2026}. For a given set of measurements, the noise-tolerance of pairwise compatibility and star-compatibility are considerably lower than that of compatibility. Hence, our work can certify randomness from a given set of measurements under noisier environments. Finally, we have used dimensional simulability \cite{Ioannou2022} to show that our framework can be used to certify randomness also in the presence of an Eve with a quantum memory.

Our analysis opens up various possible directions for the near future. A natural direction is to investigate incompatibility monogamy. We have shown that higher values of incompatibility robustness lead to fewer strategies for Eve. As Eve's strategies are limited by the POVMs pairwise compatibility with Bob's effective measurements, we see that there is a monogamy type trade-off. We expect this to be a fruitful direction within our paradigm, i.e. finding optimal strategies for Eve and optimal ways for generating randomness, as well as for foundational studies in measurement theory. We leave the precise analysis of these questions for future works.

We believe that our work can also lead to interesting questions in continuous-variable systems. Many of the measurement-theoretical results, such as the connection between incompatibility robustness and state discrimination tasks, have an infinite-dimensional counterpart \cite{kuramochi20}. This suggests a possible path towards finding counterparts of our results in continuous-variable quantum random number generation scenarios.

One can also consider making the scenario more secure. One direction is to lift the trust assumption on the input states by considering, e.g. dimension-bounded incompatibility \cite{Egelhaaf2025} or self-testing techniques \cite{Supic2020review}. Another direction is to investigate various generalisations of joint measurability, such as joint measurability on many copies \cite{carmeli16} and genuine $n$-wise incompatibility \cite{quintino19}, which may translate to meaningful attacks that a more general Eve could use. Going the other way around, we expect that known attacks of a more general Eve based on, e.g., simultaneous measurements on many particles, can find intriguing measurement-theoretical counterparts.

\section*{Acknowledgements} 

We thank Sophie Egelhaaf, Jef Pauwels and Pavel Sekatski for very fruitful discussions. The work was supported by the Swedish Research Council
(grant no. 2024-05341) and the Wallenberg Initiative on Networks and Quantum Information (WINQ).
MP is thankful to the support from the Niedersächsisches Ministerium für Wissenschaft und Kultur. LL acknowledges financial support from the Business Finland project BEQAH.

\textit{Note added.---} While finalising this manuscript, we became aware of a recent related work \cite{lobo2026generalizedmeasurementincompatibility}, which presents a qualitative connection between measurement incompatibility and Eve's guessing probability. The link requires Eve to guess the outcome for a collection of input states, but more general notions of joint measurability are considered, while our method only requires Eve to guess in the case of some full-rank state, which is unknown to her.

\bibliography{References}

\clearpage
\onecolumngrid
\newgeometry{top=1in, bottom=1in, left=1.5in, right=1.5in}
\appendix

\section{Sequential implementation of the pairwise joint measurements of Eve and Bob.}\label{app:A}

We include here for completeness the instruments and measurements needed for the identity $G_{\lambda,b|y}=\sqrt{E_\lambda}\cE^*_\lambda(\tilde B_{b|y})\sqrt{E_\lambda}$ [Eq.~\eqref{eq:Glambdaby} in the main text] of the pairwise joint measurements $\{G_{\lambda, b|y}\}$ of $\{B_{b|y}\}$ and $\{E_\lambda\}$.

Let $(\cH_0,\{P_\lambda\},J)$ be a minimal Naimark dilation of $\{E_\lambda\}$. We then define the following instrument for Eve.
\begin{align}
    \mathcal{I}_\lambda^*(\cdot):=J^*P_\lambda \cdot P_\lambda J
\end{align}

This instrument clearly implements Eve's POVM. Thus it can also be given in the form $\mathcal{I}_\lambda^*(\cdot)=\sqrt{E_\lambda}\cE^*_\lambda( \cdot) \sqrt{E_\lambda}$, where each $\cE_\lambda$ is a quantum channel \cite[Corollary 1]{Pello4}. Let then $G_{\lambda,b|y}$ be the pairwise joint measurement. Since for every $b,y$ we have $G_{\lambda,b|y}\leq E_\lambda$,  by e.g. Theorem 1.4.2 in \cite{arveson69} or Lemma 1 in \cite{haapasalo15} (see also Lemma 1 in \cite{pellonpaa14}) there exists a $0\leq \tilde{B}_{b|y}\leq \openone$ for every $b,y$ such that $[P_\lambda,\tilde{B}_{b|y}]=0$ for all $\lambda$, and $G_{\lambda,b|y}=J^*P_\lambda \tilde{B}_{b|y}J$. Furthermore as the dilation is minimal, for all $y$ the identity $J^*P_\lambda \sum_b\tilde{B}_{b|y}J=E_\lambda=J^*P_\lambda J$ implies that  $\sum_b  \tilde{B}_{b|y}=\openone$, i.e.  the collection $\{\tilde{B}_{b|y}\}$ is a POVM. Thus the equality $\sqrt{E_\lambda}\cE_\lambda^*(\tilde B_{b|y}) \sqrt{E_\lambda}=J^*P_\lambda \tilde{B}_{b|y}J=G_{\lambda,b|y}$ finishes the proof. 

\section{Details of the relation between randomness and incompatibility.}\label{app:B}

In this Appendix we provide details of the relation between measurement incompatibility and randomness for full rank witnesses. We first focus on the case where Eve tries to guess all inputs $y$ and later consider the more general case of subsets of inputs.

Let us focus on the direction where Bob's measurements $\{B_{b|y}\}$ are jointly measurable. Let $n=\#y$ be the number of inputs of Bob, and $\mathbf{b}=(b_1,\dots,b_n)$ be a vector consisting of outcomes. Now by joint measurability there is a marginal joint measurement $\{G_{\mathbf{b}}\}$ such that for all $y$ and $b$ we have the following.
\begin{align}
   B_{b|y}=\sum_{\{\mathbf{b}|b_y=b\}} G_{\mathbf{b}}.
\end{align}
We can then take a (minimal) Naimark dilation of $\{G_{\mathbf{b}}\}$, $G_{\mathbf{b}}=J^*P_{\mathbf{b}}J$ and define Eve's instrument in the Heisenberg picture as follows. Here $A$ is some bounded operator.
\begin{align}
    I_{\mathbf{e}}^*(A):=J^*P_{\mathbf{e}}AP_{\mathbf{e}}J
\end{align}
Bob's black-box measurements are then the marginals $P_{b|y}=\sum_{\{\mathbf{b}|b_y=b\}} P_{\mathbf{b}}$. Indeed one easily confirms that $\sum_{\mathbf{b}}I_{\mathbf{b}}^*(P_{b|y})=B_{b|y}$. 

Finally, suppose randomness is generated using the input $y^*$. Then Eve can guess perfectly by doing a marginal post-processing. Indeed, we see the following. 
\begin{align}
    \sum_{\{\mathbf{e}|e_{y^*}=e\}}I_{\mathbf{e}}^*(P_{b|y^*})=J^*P_{e|y^*}P_{b|y^*}J=\delta_{be}B_{b|y^*}
\end{align}
Especially $\sum_{\{\mathbf{e}|e_{y^*}=e\}}I_{\mathbf{e}}^*(P_{e|y^*})=B_{e|y^*}$ so that for any state $\rho_{b'|y'}$ Alice sends, we have the following.

$$p_{E|B}(e|e,b',y',y^*)=\frac{p_{EB}(e,e|b',y',y^*)}{p_B(e|b',y',y^*)}=\frac{\text{tr}\left[\sum_{\{\mathbf{e}|e_{y^*}=e\}}I_{\mathbf{e}}^*(P_{e|y^*})\varrho_{b'|y'}\right]}{\text{tr}\left[B_{e|y^*}\varrho_{b'|y'}\right]}=1.$$

Thus for every $y^*$, Eve can post-process her POVM to achieve perfect guessing probability regardless of the states sent in.

Let us now focus on the other direction. In other words, we assume that Eve has perfect guessing probability and that Alice is sending full rank states in the randomness rounds. Explicitly there exists a strategy for Eve resulting in the pairwise joint distribution $\{G_{\mathbf{e},e|y}\}$ fulfilling conditions of the SDP \eqref{Eq:EveGuess2MUB} such that the following holds.
\begin{align}\label{Eq:perfectguessing1}
    \sum_e\sum_y p(y)\text{tr}\left[\varrho_{b'|y+1}\sum_{\{\mathbf{e}|e_y=e\}} G_{\mathbf{e},e|y}\right]=1
\end{align}

Since $\sum_{e,\mathbf{e}}G_{\mathbf{e},e|y}=\openone$, we can rewrite the above Equation in the following form.
\begin{align}
\sum_e\sum_y p(y)\text{tr}\left[\varrho_{b'|y+1}\left(\sum_\mathbf{e} G_{\mathbf{e},e|y}-\sum_{\{\mathbf{e}|e_y=e\}}G_{\mathbf{e},e|y}\right)\right]=0
\end{align}
We assume here that $p(y)>0$ for all $y$ as Eve tries to guess perfectly for all inputs simultaneously. Clearly $\sum_\mathbf{e} G_{\mathbf{e},e|y}-\sum_{\{\mathbf{e}|e_y=e\}}G_{\mathbf{e},e|y}\ge 0$, so that the faithfulness of the $\rho_{b'|y+1}$ implies the following for all $e$ and for all $y$.
\begin{align}
\sum_\mathbf{e} G_{\mathbf{e},e|y}=\sum_{\{\mathbf{e}|e_y=e\}}G_{\mathbf{e},e|y}
\end{align}
This then further implies that $\sum_{\{\mathbf{e}|e_y=b\}} G_{\mathbf{e},e|y}=0$ if $b\neq e$. Thus we compute the following.
\begin{align}\label{Eq:perfectguessing2}
    \sum_\mathbf{e} G_{\mathbf{e},e|y}=\sum_{\{\mathbf{e}|e_y=e\}}G_{\mathbf{e},e|y}=\sum_{\{\mathbf{e}|e_y=e\},e'}G_{\mathbf{e},e'|y}=\sum_{\{\mathbf{e}|e_y=e\}}E_{\mathbf{e}}
\end{align}
In the last step we used the fact that the marginal over Bob does not depend on the input. In other words, for all inputs $y,y'$ we have $\sum_{e'}G_{\mathbf{e},e'|y}=\sum_{e'}G_{\mathbf{e},e'|y'}=:E_{\mathbf{e}}$.
Thus Bob's effective measurements $\left\{\sum_\mathbf{e} G_{\mathbf{e},e|y}\right\}$ are all jointly measurable.

Next we consider the case where Eve can not necessarily guess everything for every input of Bob. We now show how incompatibility limits Eve in this case. 

Consider a subset $S$ of Bob's inputs. We now show, assuming that Alice sends a faithful incompatibility witness, that Eve can guess perfectly for every input in $S$ if and only if for every input $y'$ the set $\{B_{b|y}\}_{y \in S} \cup \{B_{b|y'}\}$ of Bob's effective measurements
is jointly measurable. Especially the case, where for every $S$ with $\#S=k$ Eve has a strategy to guess perfectly for every input in $S$, is equivalent to every $k+1$-sized subset of Bob's effective measurements being jointly measurable. 

We choose the objective function of the SDP \eqref{Eq:EveGuess2MUB} to suit the situation explained above. Explicitly, given $S$, we define the objective function as follows. 
\begin{align}\label{Eq:guessingsubset}
    \sum_e\sum_{y\in S} p(y)\text{tr}\left[\varrho_{b'|y+1}\sum_{\{\mathbf{e}|e_y=e\}} G_{\mathbf{e},e|y}\right]
\end{align}
Note that now Eve is guessing only inputs in $S$ so that $\mathbf{e}=(e_1,\dots,e_k)$ with $k=\#S$. Thus in this case $p(y)=0$ for $y \notin S$. 

Let us first show that if $\{B_{b|y}\}_{y \in S} \cup \{B_{b|y'}\}$ is jointly measurable for every input, then we achieve probability 1 in Equation \eqref{Eq:guessingsubset}. For this, let $y'$ be an arbitrary input.
Now since $\{B_{b|y}\}_{y \in S}$ is jointly measurable, there is a marginal-form joint measurement $\{G_{\mathbf{b|}S}\}$. Let us now define $G_{\mathbf{e},b|y'}:=G_{\mathbf{e}b|S,y'}$ as a pairwise joint measurement of $\{G_{\mathbf{e|}S}\}$ and $\{B_{b|y'}\}$, which exists by the assumption that $\{B_{b|y}\}_{y \in S} \cup \{B_{b|y'}\}$ is jointly measurable. If $y' \in S$, we use the explicit pairwise joint measurement $G_{\mathbf{e}b|S,y'}=\delta_{e_{y'}b}G_{\mathbf{e}}$. Note that as mentioned in the main text, each of these can be implemented with an instrument and a black-box measurement: $G_{\mathbf{e},b|y'}=\sqrt{G_\mathbf{e}}\mathcal{E}_{\mathbf{e}}^*(\tilde{B}_{b|y'})\sqrt{G_{\mathbf{e}}}$. Clearly $\{G_{\mathbf{e},b|y'}\}$ fulfill the conditions of the SDP \eqref{Eq:EveGuess2MUB} (adapted to this case). Let us the finally show that unit probability is reached with this in Eq. \eqref{Eq:guessingsubset}.
\begin{align}
    \sum_e\sum_{y\in S} p(y)\text{tr}\left[\varrho_{b'|y+1}\sum_{\{\mathbf{e}|e_y=e\}} G_{\mathbf{e},e|y}\right]&=\sum_e\sum_{y\in S} p(y)\text{tr}\left[\varrho_{b'|y+1}\sum_\mathbf{b} \delta_{b_ye}\delta_{b_ye}G_{\mathbf{e|}S}\right] \\
    &=\sum_e\sum_{y\in S} p(y)\text{tr}\left[\varrho_{b'|y+1}B_{e|y}\right]=1
\end{align}
We then concentrate on the other direction. Suppose that 
\begin{align}
    \sum_e\sum_{y\in S} p(y)\text{tr}\left[\varrho_{b'|y+1}\sum_{\{\mathbf{e}|e_y=e\}} G_{\mathbf{e},e|y}\right]=1
\end{align}
Again, we assume that the $\rho_{b'|y+1}$ are of full rank and that $p(y)>0$ for all $y$. Then repeating the steps in Equations \eqref{Eq:perfectguessing1}-\eqref{Eq:perfectguessing2} we see that Bob's effective measurements $\left\{\sum_{\mathbf{e}} G_{\mathbf{e},e|y}\right\}$ are jointly measurable for all $y \in S$ or in other words $\sum_{\mathbf{e}}G_{\mathbf{e},e|y}=\sum_{\{\mathbf{e}|e_y=e\}} E_\mathbf{e}$, where $E_\mathbf{e}:=\sum_e G_{\mathbf{e},e|y}$. Let now $y'$ be any input. Then $\{G_{\mathbf{b},b'|y'}\}$ is clearly a marginal form joint measurement for the set $\left\{\sum_{\mathbf{e}} G_{\mathbf{e},b|y}\right\}_{y \in S} \cup \{\sum_{\mathbf{e}} G_{\mathbf{e},b'|y'}\}$, thus finishing the proof.

\section{Perfect guessing with qubit memory implies qubit simulability} \label{app:memory-toSimulability}
Suppose again that the pairwise joint distributions of Eve and effective Bob are denoted $G_{e,b|y}$. Perfect guessing now implies that $\sum_{y,e}p(y)\text{tr}\left[\varrho_{b^*|y+1} G_{e,e|y}\right]=1$.
Thus especially if we fix an input $y^*$, we have that $\sum_{e}\text{tr}\left[\varrho_{b^*|y^*+1} G_{e,e|y^*}\right]=1=\sum_e \tr{\varrho_{b^*|y^*+1} B_{e|y^*}}$ for $p(y^*)\neq0$, where $\sum_b G_{e,b|y^*}$ qubit simulable. Then as $G_{e,e|y^*}\leq B_{e|y^*}$ we see that $B_{e|y^*}=G_{e,e|y^*}$. Therefore $$B_{e|y^*}= G_{e,e|y^*}= \sum_b G_{e,b|y^*}$$ Thus $\{B_{e|y^*}\}$ is qubit simulable for all $y^*$ such that $p(y^*)\neq 0$. Especially if $p(y)>0$ for all $y$, then $\{B_{b|y}\}$ is qubit simulable for all $y$.

\section{Qubit simulability from qubit memory} \label{app:memory-fromSimulability}
Suppose Eve has a protocol where she access to a qubit memory (or some higher dimensional memory as all the following proofs stay the same for higher dimensions). 

As mentioned in the main text, the pairwise joint distributions of Eve and Bob can then be given as follows.
\begin{align}
    G_{e,b|y}=\sum_{\lambda,\lambda'}p(e|\lambda,\lambda') \mathcal I_\lambda^*(M_{\lambda'|y,\lambda}\otimes\tilde{B}_{b|y})
\end{align}
Here $M_{\lambda'|y,\lambda}$ is Eve's POVM on the qubit memory.
Based on this we define the following.

\begin{align}
    \mathcal{J}_\lambda^*&:=\mathcal{I}_\lambda^*(\cdot \otimes I)\\
    \Lambda&:=\sum_\lambda \mathcal{J}_\lambda \otimes \ket{\lambda}\bra{\lambda} \\
    M_{e|y}&:=\sum_{\lambda,\lambda'} p(e|\lambda,\lambda') M_{\lambda'|y,\lambda} \otimes \ket{\lambda}\bra{\lambda}
\end{align}
Here $\{\ket{\lambda}\}$ is some orthonormal basis "flag". With a simple calculation one sees that $\sum_bG_{e,b|y}=\Lambda^*(M_{e|y})$. Furthermore since $\{\mathcal{J_\lambda}\}$ is an instrument mapping to a qubit, all $J_\lambda$ have maximum Kraus rank 2. Therefore also $\Lambda$ has maximum Kraus rank 2 i.e. $\Lambda$ is a Schmidt-rank 2 channel. Thus $\left\{\sum_b G_{e,b|y}\right\}$ can be prepared with a Schmidt rank 2 channel and is therefore qubit simulable \cite{Ioannou2022}.
\section{Mapping between the steering scenario and the incompatibility scenario}\label{app:E}

In this Appendix we show that the SDP \eqref{Eq:EveGuess2MUB} is equivalent to the following steering-based SDP. Here $\sigma$ is some fixed full rank state.
\begin{align}\label{Eq:SteeringSDP}
        P_g(\sigma) =  \max &\sum_{y,e}p(y)\frac{\text{tr}\left[A_{b^*|y+1}\sum_{\{\mathbf{e}|e_{y}=e\}} \sigma_{\mathbf{e},e|y}\right]}{\text{tr}[\sigma A_{b^*|y+1}]} \\
        \text{s.t. } &\sum_{b,y}\text{tr}\left[ A_{b|y}\sum_{\mathbf{e}} \sigma_{\mathbf{e},b|y}\right]\geq (1+\alpha)\beta_{\text{CSR}}\nonumber\\
        &\sigma_{\mathbf{e},b|y}\geq0, \;\;\sum_{\mathbf{e},b}\sigma_{\mathbf{e},b|y}=\sigma\nonumber\nonumber\\
        &\sum_b \sigma_{\mathbf{e},b|y}=\sum_b\sigma_{\mathbf{e},b|y'}.\nonumber
\end{align}
Here $\beta_{\text{CSR}}$ refers to the classical bound with respect to the steering witness $A_{b|y}$, related to the consistent steering robustness \cite{cavalcanti16}. Let us show that this SDP can be mapped to \eqref{Eq:EveGuess2MUB} and vice versa.

Suppose that we are given the SDP \eqref{Eq:SteeringSDP}. Then defining $G_{\mathbf{e},b|y}:=\sigma^{-1/2}\sigma_{\mathbf{e},b|y}\sigma^{-1/2}$ and $\varrho_{b|y}:=\sigma^{1/2}A_{b|y}\sigma^{1/2}$ and plugging these into \eqref{Eq:SteeringSDP}, we see the following.
\begin{align}
        P_g(\sigma) =  \max &\sum_{y,e}p(y)\text{tr}\left[\frac{\varrho_{b^*|y+1}}{\text{tr}[\varrho_{b^*|y+1}]}\sum_{\{\mathbf{e}|e_{y}=e\}} G_{\mathbf{e},e|y}\right] \\
        \text{s.t. } &\sum_{b,y}\text{tr}\left[ \varrho_{b|y}\sum_{\mathbf{e}} G_{\mathbf{e},b|y}\right]\geq (1+\alpha)\beta_{\text{CSR}}\nonumber\\
        &G_{\mathbf{e},b|y}\geq0, \;\;\sum_{\mathbf{e},b}G_{\mathbf{e},b|y}=\openone\nonumber\nonumber\\
        &\sum_b G_{\mathbf{e},b|y}=\sum_b G_{\mathbf{e},b|y'}.\nonumber
\end{align}
This thus coincides with \eqref{Eq:EveGuess2MUB}, if for the incompatibility witness $\varrho_{b|y}=\sigma^{1/2}A_{b|y}\sigma^{1/2}$ we have $\beta_{\text{JM}}=\beta_{\text{CSR}}$. But this is clear as the following shows. Here we denote the set of unsteerable assemblages by $\text{US}$.
\begin{align}
\beta_{\text{CSR}}&=\sup \left\{ \sum_{b,y}\text{tr}\left[ A_{b|y} \sigma_{b|y}\right] \, \bigg| \, \{\sigma_{b|y}\} \in \text{US}, \, \sum_b \sigma_{b|y}=\sum_{\mathbf{e},b}\sigma_{\mathbf{e},b|y}=\sigma \right\}
\\
&=\sup \left\{ \sum_{b,y}\text{tr}\left[ \varrho_{b|y}(\sigma^{-1/2}\sigma_{b|y}\sigma^{-1/2}) \right] \, \bigg| \,\{\sigma_{b|y}\} \in \text{US}, \, \sum_b \sigma_{b|y}=\sigma \right\} \\
&=\sup \left\{ \sum_{b,y}\text{tr}\left[ \varrho_{b|y}M_{b|y} \right] \, \bigg| \,\{M_{b|y}\} \in \text{JM} \right\}=\beta_{\text{JM}}
\end{align}
Here $\sigma$ was an arbitrary full rank state and thus $P_g(\sigma)$ coincides with \eqref{Eq:EveGuess2MUB} for all full rank $\sigma$ with the related witness. 

For the converse direction, for any full rank state $\sigma$, one defines $\sigma_{\mathbf{e},b|y}:=\sigma^{1/2}G_{\mathbf{e},b|y}\sigma^{1/2}$ and $A_{b|y}:=\sigma^{-1/2}\varrho_{b|y}\sigma^{-1/2}$. Now $\text{tr}[\sigma A_{b^*|y+1}]=\text{tr}[\varrho_{b|y}]=1$, and furthermore $\beta_{\text{CSR}}=\beta_{\text{JM}}$ by the previous calculation. Thus plugging these definitions into the SDP \eqref{Eq:EveGuess2MUB} we easily see that it coincides with $P_g(\sigma)$.

\end{document}